\newtheorem{theorem}{Theorem}
\newtheorem{lemma}{Lemma}
\newcommand{\Exp}{\mathrm{Exp}}
\newcommand{\U}{\mathrm{U}}
\newcommand{\dx}{\mathrm{d}x}
\begin{document}


\title{Multi-probe consistent hashing}
\author{Ben Appleton, Michael O'Reilly \\
  \small{\{appleton, moreil\}@google.com} \\
  \emph{Google}}
\maketitle

\section{Abstract}

We describe a consistent hashing algorithm which performs multiple lookups per
key in a hash table of nodes. It requires no additional storage beyond the hash
table, and achieves a peak-to-average load ratio of $1 + \varepsilon$ with just
$1 + \frac{1}{\varepsilon}$ lookups per key.

\newpage

\section{Introduction}

Consistent hashing was introduced by Karger et al. \cite{Karger1997}. It allows
multiple clients to balance keys over a set of nodes without communication, and
continue to agree for almost every key as the collection of live nodes changes
over time or varies across machines. First applied to caching, consistent
hashing has also been applied to key-value stores such as Dynamo
\cite{DeCandia2007}, and routing tables such as Kademlia \cite{Maymounkov2002},
used notably by BitTorrent.

A figure of merit for load balancing is the \emph{peak-to-average load
  ratio}. In this paper we consider the case where there are many distinct keys
per node, and define the load on a node as the proportion of keys which map to a
node. Then peak-to-average load ratio is the ratio of the maximum load to the
average load over the set of nodes.

In addition to consistent placement, an ideal consistent hash might have the
following performance properties\footnote{Here $O(\ldots)$ should be interpreted
  liberally to allow in expectation, with high probability, amortized, or
  \emph{etc}.}:
\begin{enumerate}
  \item $O(n)$ space for $n$ nodes; ideally just the collection of nodes.
  \item $O(1)$ time per insertion or removal.
  \item $O(1)$ time per lookup.
  \item Peak-to-average load ratio at most $1 + \varepsilon$, for some small
    $\varepsilon$.
\end{enumerate}
However existing algorithms fall short of this ideal.

Karger et al's \emph{ring consistent hash} \cite{Karger1997} hashes each node
$O(\frac{\ln n}{\varepsilon^2})$ ways to a ring, indexing each node hash. To
assign a key to a node, it hashes the key to the ring and returns the node with
the next hash. However to obtain a peak-to-average load ratio of $1 +
\varepsilon$ it requires $O(\frac{n \ln n}{\varepsilon^2})$ memory.

Thaler and Ravishankar's highest random weight algorithm \cite{Thaler1998}
hashes each key against each node, returning the node with the highest resulting
hash. For a large number of keys this produces a peak-to-average load ratio of
1, but takes $O(n)$ time per lookup. Wang and Ravishankar \cite{Wang2009}
present a variation which takes $O(\ln n)$ time, by clustering nodes into a
pre-agreed tree then recursively selecting clusters by hashing the key against
each cluster down the tree. However this requires pre-agreement on the
hierarchy, with no provision for handling changes to the set of live nodes.

Lamping and Veach's \emph{jump consistent hash} \cite{Lamping2014} hashes each
key to a list of nodes labeled $1, 2, \ldots, n$. Keys are placed using a
pseudo-random number generator to compute a sequence of node assignments as the
number of nodes grows. It takes $O(1)$ space and $O(\ln n)$ expected time,
achieving a peak-to-average load ratio of exactly $1$. However, it does not
support the removal of arbitrary nodes. This prevents from using jump consistent
hash in applications which must handle arbitrary node loss. It also prevents
from using jump consistent hash in weighted consistent hashing
\cite{Schindelhauer2005}.

\section{Analysis}

In this paper we define `with high probability' to be with probability $1 -
\frac{1}{n^{\Omega(1)}}$.

\subsection{Hashing keys and nodes once each - naive approach}

Consider hashing $n$ nodes to the unit ring. When presented with a key, hash the
key to the unit ring, and return the next node along the ring. This requires
$O(n)$ memory and $O(1)$ time per lookup, but produces a poor peak-to-average
load.

Let $X_1, \ldots, X_n$ be the distances between successive node hashes, such
that the probability of selecting node $i$ is $X_i$. For a node hash function
selected at random from a universal ranged hash family, the distances are
distributed with $\Pr(X_i > x_i) = (1 - x_i)^n$. As Feller shows
(\cite{Feller1971} Chapter I.7), in the limit $n \rightarrow \infty$ the
distances converge to a Poisson process with independent and identically
distributed (iid) $X_i \sim \Exp(n)$; a simplifying approximation which we use
subsequently. Therefore with high probability $\max_{i=1}^n X_i =
\Theta(\frac{\ln n}{n})$. Since the mean load is $\frac{1}{n}$, the
peak-to-average load ratio is then $\Theta(\ln n)$. Since a service must be
provisioned for peak load, but its capacity is proportional to the average load,
a high peak-to-average load ratio may be unacceptable in many applications.

\subsection{Hashing nodes $J$ ways}

Ring consistent hash \cite{Karger1997} resolves this load imbalance by hashing
each node $J = O(\ln n)$ ways to \emph{virtual nodes} on the unit ring. The
virtual nodes are stored in a hash table. When presented with a key, hash the
key to the unit ring, find the next virtual node, then return the corresponding
physical node.

If we use $J$ independent hashes, the set of node hashes forms a Poisson process
with iid $X_{i, j} \sim \Exp(J n)$, and fraction of keys assigned to node $j$ is
$S_j = \sum_{i = 1}^J X_{i, j}$ with mean $\frac{1}{n}$. By Cramer's theorem
\cite{Cramer1938} $P(S_j > \frac{1 + \varepsilon}{n}) \sim e^{-J I(1 +
  \varepsilon)}$ where $I(t) = J n t - 1 - \ln(J n t)$ for this process. So to
achieve a peak-to-average load ratio of $1 + \varepsilon$ in expectation or with
high probability requires $J = \Theta(\frac{\ln n}{\varepsilon^2})$.

Note that $J$ cannot be changed online as that would break consistency. So $J$
must be sized for the maximum number of nodes expected in the lifetime of the
system, or provision must be made for changes that break consistency.

\subsection{Hashing keys $K$ ways}

We propose to store each node once but to hash keys $K$ ways, returning the
subsequent node which is closest to a key hash. We call this \emph{multi-probe
  consistent hashing}. This requires $O(n)$ space to store $n$ nodes and $O(K)$
time per lookup. Perhaps surprisingly, using only $K = 2$ key hashes improves
the peak-to-average load ratio from $O(\ln n)$ to $O(1)$ with high probability.

The key result is Theorem \ref{theorem:peak-to-average}. For technical reasons
we begin by deriving the expected peak-to-average load ratio.

\begin{lemma} \label{lemma:peak-to-average}
For $K$ independent hashes per key with $2 \leq K \ll \frac{\sqrt{n}}{\ln n}$,
for a random node hash function from a universal ranged hash family, the peak
load is $\frac{K}{K-1}\frac{1}{n} + o\left(\frac{1}{n}\right)$ in expectation.
\end{lemma}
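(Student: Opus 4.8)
The plan is to show that the load on any single node is, to within $o(1/n)$, a fixed \emph{increasing} function of that node's gap on the ring, so that the peak load equals this function evaluated at the \emph{largest} gap among the $n$ nodes; the extreme-value behaviour of the largest gap then kills the remaining correction and leaves $\tfrac{K}{K-1}\tfrac1n$.

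First I would work in the Poisson approximation established above: the $n$ node hashes have iid gaps $X_1,\dots,X_n\sim\Exp(n)$, and the forward distance from a uniform point to the next node is $\Exp(n)$. Fix a node $j$ and condition on its gap $X_j=x$. A key is placed on $j$ exactly when one of its $K$ hashes falls in this gap, at distance $u\in(0,x)$ from $j$, and $u$ is strictly the smallest of the $K$ hashes' forward distances. The chance that two of the $K$ hashes land in the same gap is $O(K^2/n)$, so up to that relative error the other $K-1$ forward distances are iid $\Exp(n)$; summing over which hash is the winner and integrating over $u$,
\[
L_j \;=\; \Pr(\text{key}\to j)\;=\;K\int_0^{x} e^{-(K-1)nu}\,\du\;+\;o\!\left(\tfrac1n\right)\;=\;\frac{K}{(K-1)n}\bigl(1-e^{-(K-1)nx}\bigr)\;+\;o\!\left(\tfrac1n\right),
\]
the error being uniform over the range $x=O(\ln n/n)$ in which the largest gap lives, provided $K\ll\sqrt n/\ln n$. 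As a consistency check, averaging over $x\sim\Exp(n)$ gives $\mathbb{E}[e^{-(K-1)nX_j}]=1/K$ and hence $\mathbb{E}[L_j]=1/n$, as conservation of keys requires.

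Since $g(x):=\frac{K}{(K-1)n}\bigl(1-e^{-(K-1)nx}\bigr)$ is increasing, $\max_j L_j = g\bigl(\max_j X_j\bigr)+o(1/n)$, where the \emph{uniformity} of the error is what lets me take the maximum inside. Write $M:=\max_j X_j$. By classical extreme-value theory for iid exponentials (equivalently, the largest spacing of a uniform sample), $\Pr(nM\le\ln n+t)=(1-e^{-t}/n)^n\to e^{-e^{-t}}$, so $nM=\ln n+O_P(1)$ and $e^{-(K-1)nM}\to0$ in probability; to pass to expectations I would use the exact identity, from $\Pr(nM\le t)=(1-e^{-t})^n$ and a Beta integral,
\[
\mathbb{E}\bigl[e^{-(K-1)nM}\bigr]\;=\;\frac{(K-1)!\,n!}{(n+K-1)!}\;\le\;\frac{K-1}{n+1}\;=\;o(1)\qquad(K=o(n)).
\]
Combining, $\mathbb{E}[\text{peak load}]=\mathbb{E}[g(M)]+o(1/n)=\frac{K}{(K-1)n}\bigl(1-\mathbb{E}[e^{-(K-1)nM}]\bigr)+o(1/n)=\frac{K}{K-1}\cdot\frac1n+o\!\left(\frac1n\right)$.

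The main obstacle is the uniform error control in the displayed formula for $L_j$: one must verify that, uniformly as a node's gap ranges up to $\Theta(\ln n/n)$, (i) the other $K-1$ forward distances are iid $\Exp(n)$ up to $o(1/n)$ — handling both the event that two of the $K$ hashes land in a common gap and the weak dependence of these distances through the shared node configuration (a variance-of-the-empty-set estimate) — and (ii) conditioning on the value of one gap perturbs the remaining Poisson process only at lower order. These are $O(K^2/n)$- and $O(K\ln n/n)$-type relative errors, which is exactly why the hypothesis $K\ll\sqrt n/\ln n$ enters; by comparison the extreme-value step and the final assembly are routine.
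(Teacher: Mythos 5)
Your proposal is correct and follows essentially the same route as the paper's proof: the peak load is the load of the maximum-gap node, a key lands there exactly when one of its $K$ hashes falls in that gap and beats the other $K-1$ forward distances, which are treated as iid exponentials, and the multiple-hashes-in-one-gap event is discarded using $K \ll \frac{\sqrt{n}}{\ln n}$. Your explicit handling of the truncation term via $\mathbb{E}\left[e^{-(K-1)nM}\right] = \frac{(K-1)!\,n!}{(n+K-1)!}$ is a slightly more careful version of what the paper absorbs by integrating to $1$ with rate $n+1$, and the conditioning/dependence issues you flag at the end are likewise glossed over in the paper's own argument.
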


\begin{proof}
Consider a $K+1$-independent universal ranged hash family, selecting $1$ node
hash function and $K$ key hash functions. Then the node hashes form a Poisson
process with rate $n$. Without loss of generality let $x_1 = \max_{i=1}^n x_i$,
such that node $1$ is the maximally loaded node. For $K \ll \frac{\sqrt{n}}{\ln
  n}$ the probability that multiple key hashes resolve to $x_1$ is
$o\left(\frac{1}{n}\right)$, which case we will neglect. Else if 1 key hash
resolves to $x_1$ it has distance $\sim \U(0, x_1)$ and the $K - 1$ other key
hashes have iid distances $\sim \Exp(n + 1)$, where the latter is obtained by
considering the key hash as another node hash. Then the probability that a key
is assigned to $x_1$ is:
\begin{equation}
\begin{split}
& K \int_{x=0}^1 e^{-(n+1)(K-1)x} \dx + o\left(\frac{1}{n}\right) \\
& = \frac{K}{K-1}\frac{1}{n} + o\left(\frac{1}{n}\right)
\end{split}
\end{equation}
\end{proof}

In \cite{McDiarmid1998} McDiarmid proved:
\begin{lemma} \label{lemma:mcdiarmid}
Let $X_1, \ldots, X_n$ be a family of independent random variables. Suppose that
the real-valued function $Z$ satisfies
\begin{equation}
\left|Z(x) - Z(x')\right| \leq c_k
\end{equation}
whenever the vectors $x$ and $x'$ differ only in the $k$th coordinate. Let
$\mu$ be the expected value of the random variable $Z(X)$. Then for any $\lambda
\geq 0$,
\begin{equation}
\Pr(\left|Z(X) - \mu\right| \geq \lambda \sigma) \leq 2e^{-2\lambda^2}
\end{equation}
where $\sigma^2 = \sum_{k=1}^n c_k^2$.
\end{lemma}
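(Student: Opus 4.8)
The plan is to prove this by the martingale method of bounded differences (Azuma--Hoeffding). Fix an ordering of the coordinates and let $\mathcal{F}_k$ be the $\sigma$-algebra generated by $X_1,\ldots,X_k$. Define the Doob martingale $Z_0 = \mu, Z_1, \ldots, Z_n = Z(X)$ by $Z_k = \mathbb{E}[Z(X) \mid \mathcal{F}_k]$, so that $Z(X) - \mu = \sum_{k=1}^n D_k$ with $D_k = Z_k - Z_{k-1}$ and $\mathbb{E}[D_k \mid \mathcal{F}_{k-1}] = 0$. The goal is a Chernoff-type bound on $\sum_k D_k$, for which it suffices to control each $\mathbb{E}[e^{tD_k}\mid\mathcal{F}_{k-1}]$.

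The key structural step, which I expect to be the main obstacle, is to show that conditional on $\mathcal{F}_{k-1}$ the difference $D_k$ is confined to an interval of width at most $c_k$; this is exactly where the independence of the $X_i$ is used. Because the $X_i$ are independent, $Z_k = g_k(X_1,\ldots,X_k)$ where $g_k(x_1,\ldots,x_k) = \mathbb{E}[Z(x_1,\ldots,x_k,X_{k+1},\ldots,X_n)]$, the expectation being over the remaining coordinates only. Applying the bounded-differences hypothesis to $Z$ pointwise in the last $n-k$ arguments and then averaging gives $|g_k(x_1,\ldots,x_{k-1},x) - g_k(x_1,\ldots,x_{k-1},x')| \leq c_k$ for all $x, x'$. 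Since $Z_{k-1}$ is the average of $g_k(x_1,\ldots,x_{k-1},X_k)$ over $X_k$, the value $D_k = g_k(X_1,\ldots,X_{k-1},X_k) - Z_{k-1}$ lies, given $\mathcal{F}_{k-1}$, between $\inf_x g_k(\cdot,x) - Z_{k-1}$ and $\sup_x g_k(\cdot,x) - Z_{k-1}$, an interval of length at most $c_k$.

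Next I would invoke Hoeffding's lemma: a zero-mean random variable supported in an interval of length $\ell$ has moment generating function at most $e^{t^2\ell^2/8}$. Applied conditionally this gives $\mathbb{E}[e^{tD_k}\mid\mathcal{F}_{k-1}] \leq e^{t^2 c_k^2/8}$. Peeling off the conditional expectations from $k=n$ down to $k=1$ via the tower property yields $\mathbb{E}[e^{t(Z(X)-\mu)}] \leq \prod_{k=1}^n e^{t^2 c_k^2/8} = e^{t^2\sigma^2/8}$. Markov's inequality then gives $\Pr(Z(X)-\mu \geq \lambda\sigma) \leq e^{-t\lambda\sigma + t^2\sigma^2/8}$ for every $t>0$, and the choice $t = 4\lambda/\sigma$ makes the exponent $-2\lambda^2$. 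Running the same argument with $-Z$ in place of $Z$ bounds the lower tail identically, and a union bound over the two tails produces the factor $2$ in the statement. Everything past the conditional boundedness of $D_k$ is the routine Chernoff/MGF optimization; that one step is the only place where anything needs care, and it is precisely the step that fails without independence.
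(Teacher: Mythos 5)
The paper does not prove this lemma: it is quoted as a known result with a citation to McDiarmid (1998), after which the text immediately moves on to applying it in Theorem~\ref{theorem:peak-to-average}. So there is no in-paper proof to compare against. Your argument is the standard Doob-martingale/Azuma--Hoeffding proof of McDiarmid's bounded-differences inequality, and it is correct and complete: you form the martingale $Z_k = \mathbb{E}[Z(X)\mid\mathcal{F}_k]$ by exposing coordinates one at a time; you use independence to write $Z_k = g_k(X_1,\ldots,X_k)$ with $g_k$ inheriting the $c_k$-Lipschitz property in its last argument, so that each increment $D_k$ is conditionally confined to an interval of length at most $c_k$; you apply Hoeffding's lemma conditionally to obtain $\mathbb{E}[e^{tD_k}\mid\mathcal{F}_{k-1}]\le e^{t^2c_k^2/8}$; you peel off the conditional expectations by the tower property to get $\mathbb{E}[e^{t(Z(X)-\mu)}]\le e^{t^2\sigma^2/8}$; and you optimize the Chernoff bound with $t=4\lambda/\sigma$, which gives the exponent $-2\lambda^2$, with the factor of $2$ coming from the union bound over the two tails. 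This is essentially the proof McDiarmid himself gives. You are also right to identify the conditional boundedness of $D_k$ as the single place where independence is genuinely used and the step that would fail without it; everything after that is routine moment-generating-function bookkeeping.
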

We proceed to use McDiarmid's inequality to prove that the bound in Lemma
\ref{lemma:peak-to-average} holds not just in expectation but with high
probability.

\begin{theorem} \label{theorem:peak-to-average}
For $K$ independent hashes per key with $2 \leq K \ll \frac{\sqrt{n}}{\ln n}$,
for a random node hash function from a universal ranged hash family, the peak
load is $\frac{K}{K-1}\frac{1}{n} + o\left(\frac{1}{n}\right)$ with high
probability.
\end{theorem}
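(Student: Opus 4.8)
The plan is to upgrade Lemma~\ref{lemma:peak-to-average} from an expectation bound to a high-probability bound via McDiarmid's inequality (Lemma~\ref{lemma:mcdiarmid}), applied with the underlying independent random variables being the $n$ node hashes together with the $K$ key hashes of a \emph{single} fixed key, and $Z$ being the indicator that the key is routed to a prescribed node --- or, better, the load $L_i$ on node $i$ (the probability over the key-hash randomness that the key lands on node $i$), viewed as a function of the $n$ node-hash coordinates alone. Actually the cleaner route: fix a node $i$ and let $Z = L_i$ be a function of the $n$ i.i.d. node hashes $U_1,\dots,U_n$; moving one node hash $U_k$ changes the load on node $i$ by at most the length of the arc that $U_k$ can sweep, which is $O(\log n / n)$ with high probability (the maximum gap), so each $c_k = O(\log n/n)$ on the high-probability event, giving $\sigma^2 = \sum_k c_k^2 = O(\log^2 n / n)$. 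Then McDiarmid yields $\Pr(|L_i - \mathbb{E}L_i| \geq \lambda \sigma) \leq 2e^{-2\lambda^2}$; choosing $\lambda = \Theta(\sqrt{\log n})$ makes the deviation $\lambda\sigma = O(\log^{3/2} n / \sqrt n) = o(1/n)$ only if... wait, that is not $o(1/n)$, so the concentration radius must be squeezed harder.

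The resolution, and the heart of the argument, is to choose $c_k$ much more carefully and locally: the load on node $i$ depends on node hashes only through the gap immediately preceding $x_i$ and (for the closest-subsequent-node rule with $K$ probes) a few neighbouring gaps, and each such gap is $\Theta(1/n)$ except on the rare event that it is anomalously large. So conditionally on the gaps near node $i$ being $O(\log n / n)$ --- which holds with high probability --- and on \emph{most} of the $c_k$ being genuinely $O(1/n)$ because node $k$'s hash is far from $x_i$ (in which case perturbing it cannot affect $L_i$ at all, i.e. $c_k = 0$), one gets $\sigma^2 = \sum_k c_k^2 = O(\log^2 n / n^2)$: only $O(\log n)$ coordinates have nonzero $c_k$, each of size $O(\log n / n)$. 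Then $\lambda\sigma$ with $\lambda = \Theta(\sqrt{\log n})$ is $O(\log^{3/2} n / n) $, still not $o(1/n)$; so in fact one must argue that with high probability the relevant gaps are $\Theta(1/n)$, not merely $O(\log n/n)$, \emph{for the maximally loaded node}, using that the max-load node's defining gap is itself only polylogarithmically large. Concretely: condition on the ordered gap structure; the event ``peak load $> \frac{K}{K-1}\frac1n + \omega(1/n)$'' forces some node to have a large preceding gap, and a union bound over the $n$ nodes combined with the per-node concentration (with the correct, small $\sigma$) closes the loop.

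So the steps, in order, are: (1) set up the probability space as $n$ node hashes plus $K$ key hashes, all independent, from the $(K{+}1)$-independent family; (2) for a fixed node $i$, express $L_i$ as a function of the node hashes and bound the effect of changing one node hash, showing $c_k=0$ unless $U_k$ lies within $O(\log n/n)$ of $x_i$, an event of probability $O(\log n/n)$, and $c_k = O(\log n/n)$ otherwise, hence $\sigma^2 = O(\log^2 n / n^2)$ on a high-probability event; (3) invoke Lemma~\ref{lemma:mcdiarmid} with $\lambda = c\sqrt{\log n}$ to get $\Pr(L_i \geq \mathbb{E}L_i + O(\log^{3/2}n/n)) \leq n^{-\Omega(1)}$ --- and here one sharpens the gap estimate near the heaviest node so that the additive slack is genuinely $o(1/n)$, using that $\mathbb{E}L_i$ itself is $\frac{K}{K-1}\frac1n + o(1/n)$ by Lemma~\ref{lemma:peak-to-average}; (4) take a union bound over all $n$ nodes; (5) handle the conditioning on the high-probability ``good gap'' event by a standard clean-up (the complementary event contributes $o(1/n)$ to the expectation and $n^{-\Omega(1)}$ to the probability, matching the claim).

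The main obstacle is step (2)--(3): getting the bounded-differences constants $c_k$ small enough that $\lambda \sigma$ is truly $o(1/n)$ rather than merely $o(\log n / n)$. A naive application gives only $O(\text{polylog}\, n / n)$ concentration, which is too weak to pin the peak at $\frac{K}{K-1}\frac1n + o(1/n)$; one must exploit that (i) only $O(\log n)$ node hashes can possibly influence any given $L_i$, so the sum defining $\sigma^2$ is short, and (ii) for the \emph{specific} node that achieves the peak, the governing gaps are only polylogarithmically (indeed, one can argue $O(1)$-factor-ish after conditioning) above the mean $1/n$, so each surviving $c_k$ is $O(1/n)$ up to lower-order corrections --- then $\sigma = O(\sqrt{\log n}/n)$ and $\lambda \sigma = O(\log n / n)$; squeezing this final $\log n$ down to $o(1)$ requires either a slightly more refined deviation inequality or absorbing it into the $o(1/n)$ by noting the constraint $K \ll \sqrt n/\log n$ gives room. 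I expect the authors' proof to exploit exactly this interplay between the sparsity of influential coordinates and the smallness of the relevant gaps.
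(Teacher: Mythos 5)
Your proposal identifies the right tool (McDiarmid applied after Lemma~\ref{lemma:peak-to-average}) but parametrizes the problem in a way that makes the bounded-difference constants too large, and you explicitly acknowledge that you cannot close the gap. The paper's proof takes a materially different route that avoids the difficulty entirely.

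The paper does not apply McDiarmid to a per-node load $L_i$ as a function of the $n$ node positions, with a union bound over nodes. Instead, it writes the peak load itself in closed form as
\[
Z_K \;=\; K \int_{0}^{1} \bigl(1 - F(x)\bigr)^{K-1}\,\dx ,
\qquad 1 - F(x) = \sum_{i=1}^{n}\max(x_i - x,\,0),
\]
and applies McDiarmid to $Z_K$ viewed as a function of the \emph{gap lengths} $x_1,\dots,x_n$ (treated as i.i.d.\ $\Exp(n)$ under the Poisson-process approximation). The crucial observation, which your argument misses, is that $Z_K$ is a \emph{higher-order} functional of the gaps, so its bounded differences are much smaller than the $O(\log n / n)$ you derive. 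Concretely, for $K=2$ one has $Z_2 = \sum_i x_i^2$, so perturbing one gap within its high-probability range $[0, c\ln n / n]$ changes $Z_2$ by at most $c_k = (c\ln n / n)^2$, giving $\sigma^2 = \sum_k c_k^2 = n\,(c\ln n/n)^4$ and hence $\sigma = (c\ln n)^2/n^{3/2} = o(1/n)$. For $K > 2$ an induction shows $c_k = K\,\frac{c\ln n}{n}\, Z_{K-1} = O(K\ln n / n^2)$, and $\sigma = O(K\ln n / n^{3/2}) = o(1/n)$ precisely because $K \ll \sqrt{n}/\ln n$. This ``quadratic smallness'' of the $c_k$ is what makes the concentration tight at the $o(1/n)$ scale, and it falls out automatically once you analyze $Z_K$ rather than a per-node indicator or load.

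By contrast, your version puts $c_k = O(\log n / n)$ for each coordinate that can influence $L_i$, and then tries to win by sparsity (only $O(\log n)$ coordinates are relevant). But even in the best case that gives $\sigma = O(\log^{3/2} n / n)$, and after multiplying by $\lambda = \Theta(\sqrt{\log n})$ you are left with $O(\log^2 n / n)$, which is not $o(1/n)$. Your proposed repair --- arguing the governing gaps near the peak node are ``only polylogarithmically above $1/n$'' --- does not resolve this, since the maximally loaded node is by definition the one whose gap \emph{is} $\Theta(\log n / n)$; that is exactly the case you cannot discount. The resolution is not a sharper gap estimate but a different dependent variable: expressing the peak load as the integral $Z_K$, so that the per-coordinate sensitivity is a squared (or higher-power) gap length rather than a gap length, is what the proof actually hinges on.
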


\begin{proof}
For one independent key hash, the probability that the distance to the next node
hash is at most $x$ is $F(x)$, with:
\begin{equation}
1 - F(x) = \sum_{i=1}^n
\begin{cases}
x_i - x & \text{if } x \leq x_i \\
0 & \text{if } x > x_i \\
\end{cases}
\end{equation}

For $K \ll \frac{n}{\ln n}$ the probability that a single key hashes multiple
times to the maximally loaded node is $o(\frac{1}{n})$. Then defining
\begin{equation} \label{equation:peak-load}
Z_K = K \int_{x=0}^1 \left(1 - F(x)\right)^{K-1} \dx
\end{equation}
the peak load for $K$ key hashes is $Z_K + o(\frac{1}{n})$. For a random node
hash function, the expected value of $Z_K$ is $\frac{K}{K-1}\frac{1}{n} +
o(\frac{1}{n})$.

Recall we have $\mu = \frac{K}{K-1}\frac{1}{n} + o\left(\frac{1}{n}\right)$, and
$0 \leq x_k \leq \frac{c \ln n}{n}$ with high probability.

Begin with the case $K = 2$. Equation \ref{equation:peak-load} simplifies to
$Z_2 = \sum_{i=1}^n x_i^2$, which gives $c_k = \left(\frac{c \ln n}{n}\right)^2$
and hence $\sigma = \frac{\left(c \ln n\right)^2}{n \sqrt{n}} =
o\left(\frac{1}{n}\right)$, so $Z_2 = \frac{2}{n} + o\left(\frac{1}{n}\right)$
with high probability.

For the case $K > 2$ we obtain
\begin{equation}
\begin{split}
c_k & = K \frac{c \ln n}{n} (K-1) \int_{x=0}^1 (1 - F(x))^{K-2} \dx \\
 & = K \frac{c \ln n}{n} Z_{K-1} \\
 & = O\left(\frac{K \ln n}{n^2}\right) \\
\end{split}
\end{equation}
The first equality notes that $x_k$ appears $K - 1$ times in $(1-F(x))^{K-1}$,
whose difference with respect to $x_k$ is at most $\frac{c \ln n}{n}(K - 1)(1 -
F(x))^{K-2}$, discarding higher-order terms by $K = o(\frac{n}{\ln n})$. We then
induct on $K$. Then by $\sigma^2 = \sum_{k=1}^n c_k^2$ we have $\sigma =
O\left(\frac{K \ln n}{n \sqrt{n}}\right)$. Since $K = o(\frac{\sqrt{n}}{\ln n})$
we obtain $\sigma = o(\frac{1}{n})$, and hence the desired result.
\end{proof}

So the peak-to-average load ratio is $\frac{K}{K-1} + o(1)$ with high
probability. To achieve a peak-to-average load ratio of $1 + \varepsilon$
requires $K = 1 + \frac{1}{\varepsilon}$ hashes, and $O(\frac{1}{\varepsilon})$
time per lookup.

\subsection{Other properties}

Karger et al \cite{Karger1997} defined other important properties for consistent
hash functions: \emph{monotonicity}, \emph{spread} and \emph{load}. For
completeness we will address these properties. Our proofs closely mirror
\cite{Karger1997}.

\begin{theorem}
  For $K = O(1)$ the hash family $F$ described in this paper has the following
  properties:
\begin{enumerate}
  \item Monotonicity: $F$ is monotone.
  \item Spread: If the number of views $V = \rho n$ for some constant $\rho$,
    and the number of keys $I = n$, then for $i \in \mathcal{I}$, $\sigma(i)$
    is $O(t \ln n)$ with high probability.
  \item Load: If $V$ and $I$ are as above, then for $n \in \mathcal{N}$,
    $\lambda(n)$ is $O(t \ln n)$ with high probability.
\end{enumerate}
\end{theorem}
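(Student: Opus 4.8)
The plan is to follow \cite{Karger1997} closely, adapting each argument to the multi-probe construction: each node contributes a single ring point, each key contributes $K = O(1)$ ring points, and $f_V(i)$ is the node of $V$ whose ring point is the closest clockwise successor of one of the $K$ ring points of $i$, with ties broken by node identifier. As in \cite{Karger1997} I would carry the hypothesis on the views — they are fixed in advance and every view contains at least $n/t$ of the nodes — which is what pins down the bound $O(t\ln n)$.

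For \emph{monotonicity} I would argue directly, reducing to the addition of a single node $b$ to a view $V$. For each of the $K$ ring points $p_j$ of an item $i$, its closest clockwise successor in $V\cup\{b\}$ either is unchanged or becomes $b$ (the latter exactly when $b$ lands on the open arc from $p_j$ to its old successor), so each of the $K$ successor distances stays equal or drops to $d(p_j,b)$. Hence the minimum-distance winner for $i$ is either its old winner or $b$, giving $f_{V\cup\{b\}}(i)\in\{f_V(i),b\}$, which is exactly monotonicity; the general case follows by adding nodes one at a time.

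For \emph{spread} and \emph{load} I would first set up a single covering event $A$: for every item $i$, every view $V$, and every $j$, the distance from $p_j(i)$ to its closest clockwise node hash \emph{inside} $V$ is at most $\frac{ct\ln n}{n}$. For fixed $i,V,j$ this distance is the least clockwise gap from a uniform point to the $\ge n/t$ uniform node hashes of $V$, hence exceeds $\frac{ct\ln n}{n}$ with probability at most $(1-\frac{ct\ln n}{n})^{n/t}\le n^{-c}$; a union bound over the $O(Kn^2)$ triples gives $\Pr(A)\ge 1-n^{-\Omega(1)}$ for $c$ large. On $A$, for every view $V$ the node $f_V(i)$ lies among the node hashes falling in one of the $K$ arcs of length $\frac{ct\ln n}{n}$ immediately clockwise of $p_1(i),\dots,p_K(i)$. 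From here: conditioning on the item hashes, the number of (uniform) node hashes in those arcs is dominated by $\Bin(n,\frac{Kct\ln n}{n})$ with mean $O(t\ln n)$, so a Chernoff bound and a union bound over the $I=n$ items give $\sigma(i)=O(t\ln n)$ for all $i$ with high probability; and, conditioning on the node hashes, an item can be routed to a fixed node $b$ in some view only if, on $A$, one of its $K$ ring points lies in the length-$\frac{ct\ln n}{n}$ arc immediately counterclockwise of $b$'s ring point, and the number of the $Kn$ (uniform) item ring points there is dominated by $\Bin(Kn,\frac{ct\ln n}{n})$ with mean $O(t\ln n)$, so a Chernoff bound and a union bound over the $n$ nodes give $\lambda(b)=O(t\ln n)$ for all $b$ with high probability (the $I/n=1$ factor of \cite{Karger1997} being trivial here).

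The step I expect to be the real obstacle is the covering event $A$, not the counting: because we do not replicate nodes, a sparse view could route an assignment far around the ring and inflate both spread and load, so it is precisely the view-size hypothesis of \cite{Karger1997} that confines every $f_V(i)$ to an $O(\frac{t\ln n}{n})$ arc of one of $i$'s hashes. Once $A$ is established the rest is routine Chernoff-and-union-bound bookkeeping, with the constant factor $K$ introduced by multi-probing harmlessly absorbed into the $O(\cdot)$.
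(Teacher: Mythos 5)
Your proposal is correct and follows essentially the same route as the paper: monotonicity by observing that adding a node can only shorten successor distances or introduce the new node as the winner, and spread/load by confining every $f_V(i)$ to an $O(\frac{t\ln n}{n})$ arc around the key hashes and then counting. The paper states the covering event and the two counting steps only at sketch level (``our proofs closely mirror \cite{Karger1997}''), whereas you supply the union bound over the $O(n^2)$ triples and the Chernoff bounds explicitly, but the underlying argument is identical.
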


\begin{proof}
Monotonicity: Adding a node to the ring does not increase the distance from any
key hash to the next node, and does not reduce the distance from any key hash to
any existing node. So no key can switch to an existing node.

Spread and load follow from the observation that with high probability, a point
from every view falls into an interval of length $O(\frac{t \ln n}{n})$. Spread
follows by observing that for each key, the number of node points that fall
within this size interval around the $K$ key hashes, $O(t \ln n)$, is an upper
bound on the spread of that key. Load follows by counting the number of key
hashes that fall in the region owned by a node hash, $O(t \ln n)$.
\end{proof}

\subsection{Performance summary}

\begin{table}
  \centering
  \caption{Properties of each algorithm}
  \label{table:performance}
  \begin{tabular}{|r || r | r | r |}
    \hline
     & Jump c. h. & Ring c. h. & Multi-probe c. h. \\
    \hline
    Peak-to-average & $1$ & $1 + \varepsilon$ & $1 + \varepsilon$ \\
    Memory & $O(1)$ & $O(\frac{n \ln n}{\varepsilon^2})$ & $O(n)$ \\
    Update time & 0 & $O(\frac{\ln n}{\varepsilon^2})$ & $O(1)$ \\
    Assignment time & $O(\ln n)$ & $O(1)$ & $O(\frac{1}{\varepsilon})$ \\
    Arbitrary node removal & No & Yes & Yes \\
    \hline
  \end{tabular}
\end{table}

Table \ref{table:performance} summarizes the performance of each algorithm for
$n$ nodes.

\section{Implementation}

For a hash table we use an array of sorted, inlined vectors. The array is sized
to about 6 nodes per inlined vector. The inlined vector stores the first 8
elements inline, then spills to an out-of-line buffer. This avoids
pointer-chasing in the common case. We use $64$-bit identifiers and hashes. We
store the hash alongside the node identifier to save on subsequent hashes.

\section{Performance}

We compare multi-probe consistent hash to ring consistent hash \cite{Karger1997}
and jump consistent hash \cite{Lamping2014}.

\subsection{Peak-to-average load ratio}

We measured the peak-to-average load ratio over a range of node counts. For each
node count we ran 1,000 trials using different node hash seeds to obtain
percentiles over the statistic of interest: peak-to-average load ratio. For each
trial we sampled 1,000,000 keys per node. These simulations were run on a
cluster of machines.

\begin{table}
  \centering
  \caption{Peak-to-average, $K = 2$}
  \label{table:peak-to-average:K=2}
  \begin{tabular}{|r || r | r | r|}
    \hline
    Number of nodes & median of trials & 90\%ile of trials & 99\%ile of trials \\
    \hline
    10 & 1.74 & 2.43 & 3.32 \\
    100 & 1.96 & 2.22 & 2.48 \\
    1,000 & 2.00 & 2.08 & 2.16 \\
    10,000 & 2.00 & 2.03 & 2.05 \\
    100,000 & 2.00 & 2.01 & 2.02 \\
    \hline
  \end{tabular}
\end{table}

Table \ref{table:peak-to-average:K=2} shows the peak-to-average load ratio for
multi-probe consistent hash with $K = 2$. The peak-to-average load ratio
converges to 2. This requires 30-60 ns per lookup and 2.2MB of memory for the
largest set of nodes.

\begin{table}
  \centering
  \caption{Peak-to-average, $K = 21$}
  \label{table:peak-to-average:K=21}
  \begin{tabular}{|r || r | r | r|}
    \hline
    Number of nodes & median of trials & 90\%ile of trials & 99\%ile of trials \\
    \hline
    10 & 1.04 & 1.13 & 1.24 \\
    100 & 1.05 & 1.08 & 1.10 \\
    1,000 & 1.05 & 1.06 & 1.07 \\
    10,000 & 1.05 & 1.06 & 1.06 \\
    100,000 & 1.05 & 1.06 & 1.06 \\
    \hline
  \end{tabular}
\end{table}

Table \ref{table:peak-to-average:K=21} shows the peak-to-average load ratio for
multi-probe consistent hash with $K = 21$. The peak-to-average load ratio
converges to 1.05.

\begin{table}
  \centering
  \caption{Peak-to-average, $J = \ln n$}
  \label{table:peak-to-average:J=1}
  \begin{tabular}{|r | r || r | r | r|}
    \hline
    Number of nodes & J & median of trials & 90\%ile of trials & 99\%ile of trials \\
    \hline
    10 & 2 & 2.23 & 3.05 & 3.96 \\
    100 & 4 & 2.64 & 3.24 & 4.05 \\
    1,000 & 6 & 2.84 & 3.29 & 3.75 \\
    10,000 & 9 & 2.79 & 3.11 & 3.51 \\
    100,000 & 11 & 2.89 & 3.15 & 3.40 \\
    \hline
  \end{tabular}
\end{table}

Table \ref{table:peak-to-average:J=1} shows the peak-to-average load ratio for
ring consistent hash with $J = \ln n$ hashes per node. The peak-to-average load
ratio converges to $e$.

\begin{table}
  \centering
  \caption{Peak-to-average, $J = 700 \ln n$}
  \label{table:peak-to-average:J=700}
  \begin{tabular}{|r | r || r | r | r|}
    \hline
    Number of nodes & J & median of trials & 90\%ile of trials & 99\%ile of trials \\
    \hline
    10 & 1611 & 1.04 & 1.06 & 1.08 \\
    100 & 3223 & 1.05 & 1.06 & 1.07 \\
    1,000 & 4835 & 1.05 & 1.05 & 1.06 \\
    10,000 & 6447 & 1.05 & 1.05 & 1.06 \\
    100,000 & - & - & - & - \\
    \hline
  \end{tabular}
\end{table}

Table \ref{table:peak-to-average:J=700} shows the peak-to-average load ratio for
ring consistent hash with $J = 700 \ln n$ hashes per node. The peak-to-average
load ratio converges to 1.05. At 10,000 nodes the table required 1,400MB of
memory. At 100,000 nodes the table did not fit in memory on the available
machines.

\subsection{Timings}

For multi-probe consistent hash we set $K = 21$, obtaining a peak-to-average
load ratio of 1.05.

For ring consistent hash we set $J = 700 \ln n$, obtaining a peak-to-average
load ratio of 1.05.  The implementation of ring consistent hash uses a hash
table for $O(1)$ assignment, similar to the implementation of multi-probe
consistent hash.

The implementation of jump consistent hash is taken without modification from
\cite{Lamping2014}. Jump consistent hash achieves a peak-to-average load ratio
of 1.0.

All implementations are in C++. Binaries are compiled on a 64-bit platform using
GNU C++ and measured on 1 core of an Intel Xeon W3690 @3.47GHz.

\begin{table}
  \centering
  \caption{Initialization time (ns) per node}
  \label{table:initialization}
  \begin{tabular}{|r || r | r | r|}
    \hline
    Number of nodes & Multi-probe c. h. & Ring c. h. & Jump c. h. \\
     & $\varepsilon = 0.05$ & $\varepsilon = 0.05$ & $\varepsilon = 0$ \\
    \hline
    10 & 28 & 58,000 & 0 \\
    100 & 29 & 175,000 & 0 \\ 
    1,000 & 31 & 555,000 & 0 \\ 
    10,000 & 41 & 910,000 & 0 \\ 
    100,000 & 40 & - & 0 \\ 
    \hline
  \end{tabular}
\end{table}

Table \ref{table:initialization} shows the initialization time per
node. Multi-probe consistent hash is constant except for a step at 10,000
nodes, as the hash table spills to L3 cache\footnote{Cache spilling is visible
  throughout the timings. We will not comment upon each instance.}. Ring
consistent hash requires orders of magnitude more initialization time per
node. Jump consistent hash requires no initialization.

\begin{table}
  \centering
  \caption{Memory (bytes) per node}
  \label{table:memory}
  \begin{tabular}{|r || r | r | r|}
    \hline
    Number of nodes & Multi-probe c. h. & Ring c. h. & Jump c. h. \\
     & $\varepsilon = 0.05$ & $\varepsilon = 0.05$ & $\varepsilon = 0$ \\
    \hline
    10 & 22 & 35,000 & 0 \\
    100 & 22 & 71,000 & 0 \\ 
    1,000 & 22 & 106,000 & 0 \\ 
    10,000 & 22 & 142,000 & 0 \\ 
    \hline
  \end{tabular}
\end{table}

Table \ref{table:memory} shows the memory per node, where we have used 64 bit
hashes and 64 bit node identifiers. Multi-probe consistent hash uses constant
memory per node. Ring consistent hash requires orders of magnitude more memory,
commensurate with its high initialization time. Jump consistent hash requires
no memory.

\begin{table}
  \centering
  \caption{Assignment time (ns)}
  \label{table:assignment}
  \begin{tabular}{|r || r | r | r|}
    \hline
    Number of nodes & Multi-probe c. h. & Ring c. h. & Jump c. h. \\
     & $\varepsilon = 0.05$ & $\varepsilon = 0.05$ & $\varepsilon = 0$ \\
    \hline
    10 & 350 & 29 & 32 \\
    100 & 420 & 60 & 50 \\ 
    1,000 & 430 & 110 & 67 \\ 
    10,000 & 590 & 130 & 80 \\ 
    100,000 & 590 & - & 94 \\ 
    \hline
  \end{tabular}
\end{table}

Table \ref{table:assignment} shows the time per key. Multi-probe consistent
hash is takes constant time modulo cache effects. Ring consistent hash is a few
times faster. Jump consistent hash is generally fastest as it does not access
memory.

\begin{table}
  \centering
  \caption{Update time (ns)}
  \label{table:update}
  \begin{tabular}{|r || r | r | r|}
    \hline
    Number of nodes & Multi-probe c. h. & Ring c. h. & Jump c. h. \\
     & $\varepsilon = 0.05$ & $\varepsilon = 0.05$ & $\varepsilon = 0$ \\
    \hline
    10 &      33 &   135,000 & 0 \\
    100 &     51 &   360,000 & 0 \\
    1000 &    70 & 1,000,000 & 0 \\
    10000 &   79 & 1,800,000 & 0 \\
    100000 & 107 &         - & 0 \\
    \hline
  \end{tabular}
\end{table}

Table \ref{table:update} shows the amortized time per insertion or removal of a
node, measured by inserting from empty to full then removing from full to empty
again (in random order). Multi-probe consistent hash requires only $O(1)$
amortized time per insertion or removal. Ring consistent hash requires orders of
magnitude more time per update. Jump consistent hash requires no time to update,
as it does not maintain a hash table.

It's important to note that all timings above are for uncontended caches, such
that the hash table of nodes are cached near the CPU. However caches are
typically contended in practice, which may evict the hash table of nodes to L3
or even main memory. Key assignment and node updates may be commensurately
slower for multi-probe and ring consistent hash.

\section{Discussion}

Jump consistent hash is not generally applicable, as it cannot handle the loss
of an arbitrary node. However where applicable it generally requires less time
and space than the alternatives, in which case we recommend jump consistent
hash.

Ring consistent hash has fast key assignment: just one hash table
lookup. However to achieve a peak-to-average load ratio of $1 + \varepsilon$
over $n$ nodes it requires $O(\frac{n \ln n}{\varepsilon^2})$ memory,
potentially multiple gigabytes in practice. It is correspondingly slow to
initialize and to update.

Multi-probe consistent hash stores each node just once in a hash table, so it
requires only $O(n)$ memory and supports updates in $O(1)$ expected amortized
time. To achieve a peak-to-average load ratio of $1 + \varepsilon$ it requires
$O(\frac{1}{\varepsilon})$ time per lookup. In practice it can achieve a
peak-to-average load ratio of 1.05 in 350-600 ns per key assignment, while
scaling to larger node sets than possible with ring consistent hash. This makes
multi-probe consistent hash an attractive replacement for ring consistent hash.

It's interesting to note the similarity between multi-probe consistent hash and
cuckoo hashing \cite{Pagh2004}, in which hashing keys two ways achieves a load
factor up to $\frac{1}{2}$ for an in-memory hash table. The authors speculate
that there might be fruitful connections to explore here.

\section{Acknowledgements}

The authors would like to thank Eric Veach, Raul Vera and David Symonds for many
insightful comments on the draft paper.

\bibliographystyle{plain}
\bibliography{paper}

\newpage

\begin{figure}
  \centering
  \includegraphics{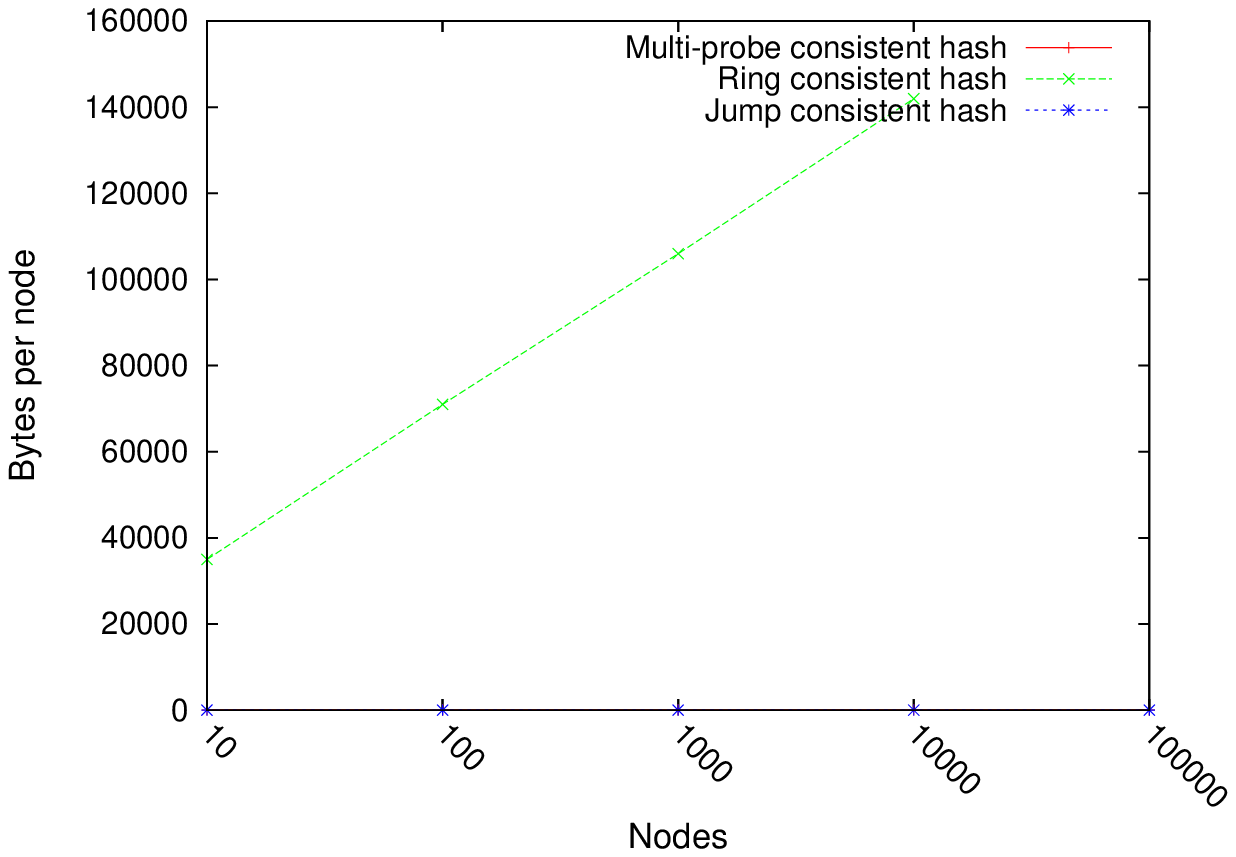}
  \label{figure:memory}
  \caption{Memory (bytes) per node, $\varepsilon \leq 0.05$}
\end{figure}

\begin{figure}
  \centering
  \includegraphics{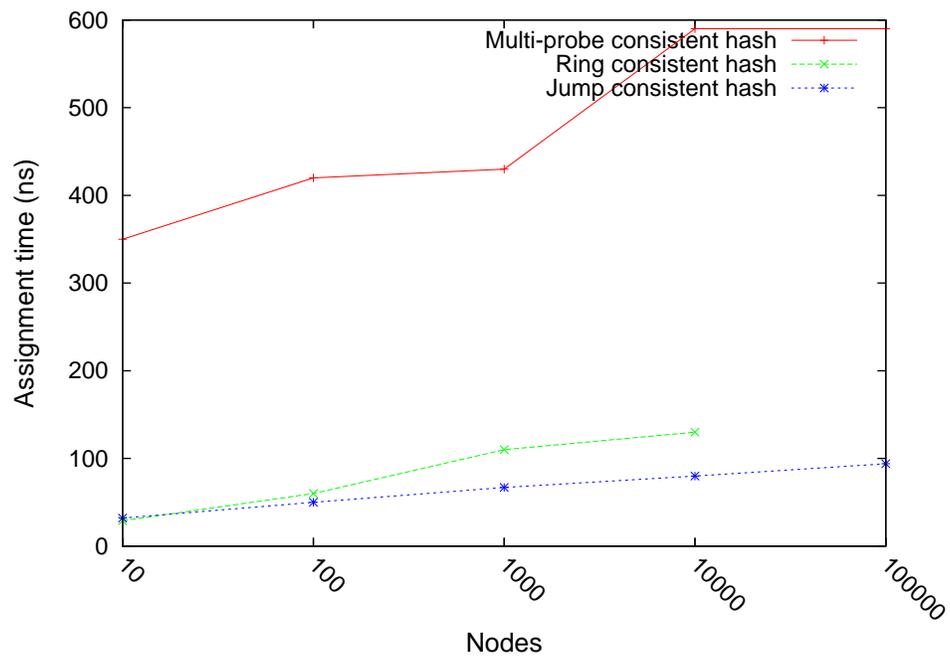}
  \label{figure:time}
  \caption{Assignment time (ns) per key, $\varepsilon \leq 0.05$}
\end{figure}

\begin{figure}
  \centering
  \includegraphics{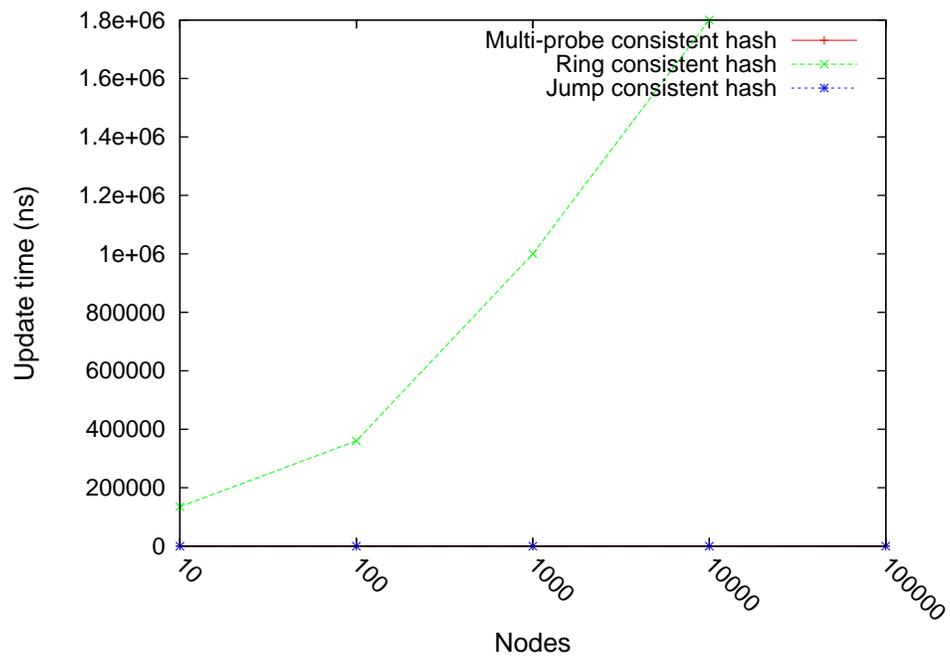}
  \label{figure:update}
  \caption{Update time (ns), $\varepsilon \leq 0.05$}
\end{figure}

\end{document}